\definecolor{darkred}{RGB}{150,0,0}
\definecolor{darkgreen}{RGB}{0,150,0}
\definecolor{darkblue}{RGB}{0,0,200}
\newtheorem{prp}{Proposition}
\def\beq{\begin{equation}}
\def\eeq{\end{equation}}
\def\beqn{\begin{eqnarray*}}
\def\eeqn{\end{eqnarray*}}
\def\bitem{\begin{itemize}}
\def\eitem{\end{itemize}}
\def\benum{\begin{enumerate}}
\def\eenum{\end{enumerate}}
\def\bmult{\begin{multline*}}
\def\emult{\end{multline*}}
\newcommand{\prpref}[1]{Proposition~\ref{prp:#1}}
\def\bA{\mathbf{A}}
\def\bB{\mathbf{B}}
\def\bI{\mathbf{I}}
\def\bQ{\mathbf{Q}}
\def\bh{\mathbf{h}}
\def\bu{\mathbf{u}}
\def\bv{\mathbf{v}}
\def\bw{\mathbf{w}}
\def\bx{\mathbf{x}}
\def\by{\mathbf{y}}
\def\bz{\mathbf{z}}
\def\bbR{\mathbb{R}}
\begin{document}

\title{Noise Folding in Compressed Sensing}

\author{Ery~Arias-Castro\thanks{E.~Arias-Castro is with the Department of Mathematics, University of California, San Diego, CA 92093 USA ({\tt eariasca@ucsd.edu}).  His work is partially supported by the Office of Naval Research under Grant N00014-09-1-0258.} \ 
and Yonina~C.~Eldar\thanks{Y.~C.~Eldar is with the Department of Electrical Engineering, TechnionÑ
Israel Institute of Technology, Haifa 32000, Israel ({\tt yonina@ee.technion.ac.il}).  Her work is supported in part by the Israel Science Foundation under Grant 1081/07, and by the Ollendorf foundation.}}


\maketitle

\begin{abstract}
The literature on compressed sensing has focused almost entirely on settings where the signal is noiseless and the measurements are contaminated by noise.  In practice, however, the signal itself is often subject to random noise prior to measurement.  We briefly study this setting and show that, for the vast majority of measurement schemes employed in compressed sensing, the two models are equivalent with the important difference that the signal-to-noise ratio is divided by a factor proportional to $p/n$, where $p$ is the dimension of the signal and $n$ is the number of observations. 
Since $p/n$ is often large, this leads to noise folding which can have a severe impact on the SNR.

\medskip

\noindent {\bf Keywords:} 
Compressed sensing, matching pursuit, sparse signals, analog noise vs.~digital noise, noise folding.
\end{abstract}



\section{Introduction}

The field of compressed sensing (CS), focused on recovery of sparse vectors from few measurements, has been attracting vast interest in recent years due to its potential use in numerous signal processing applications \cite{donoho-CS,DuartE_Structured,CRT2}. 
The standard CS setup assumes that we are given measurements
\beq \label{w}
\by = \bA \bx + \bw,
\eeq
where $\by \in \bbR^n$ is the measurement vector, $\bA \in \bbR^{n \times p}$ is the measurement matrix with $n \ll p$, and $\bw \in \bbR^n$ is additive noise.  The signal
$\bx \in \bbR^p$ is assumed to be $s$-sparse, so that no more than $s$ elements of $\bx$ are nonzero. It is also common to assume that $\bx$ is deterministic, an assumption we make throughout.
To recover $\bx$ from $\by$ a variety of algorithms have been developed. These include greedy
algorithms, such as thresholding and orthogonal matching
pursuit (OMP) \cite{PatiRK_Orthogonal}, and relaxation methods, such as basis pursuit \cite{donoho-BP}
(also known as the Lasso) and the Dantzig selector~\cite{CandeT_Dantzig}. 

An important aspect of CS analysis is to develop bounds on the recovery performance of these methods in the presence of noise.
Two standard approaches to modeling the noise $\bw$
is either to assume that $\bw$
is deterministic and bounded \cite{CRT2}, or that $\bw$ is a white noise vector, typically Gaussian \cite{CandeT_Dantzig,BRT09,Ben-Haim:2010:CPG:1878287.1878292,BE10}. The former setting leads to a worst-case
analysis in which an estimator must perform adequately
even when the noise maximally damages the measurements.
By contrast, if
one assumes that the noise is random, then better performance bounds can be obtained. We therefore focus here on the random noise scenario.

Two standard measures used to analyze the behavior of CS recovery techniques are the coherence and the restricted isometry property (RIP)~\cite{intro-CS}.
If the coherence and RIP of the measurement matrix are small enough, then standard recovery methods such as basis pursuit, OMP and threhsolding can recover $\bx$ from $\by$ with squared-error that is proportional to the sparsity level $s$ and the noise variance $\sigma^2$, times a factor that is logarithmic in the signal length $p$~\cite{CandeT_Dantzig,BRT09,Ben-Haim:2010:CPG:1878287.1878292,BE10}.

In many practical settings, noise is introduced to the signal $\bx$ prior to measurement. As an example, one of the applications of CS is to the design of sub-Nyquist A/D converters. In this context $\bx$ represents the analog signal at the entrance to the A/D converter, which is typically contaminated by noise~\cite{xampling}.
Though important in practice, the prolific literature on CS has not treated signal noise in detail. Recently, several papers raised this important issue \cite{treichler-davenport,BME10,5470150}.
These works all point out the fact that noise present in $\bx$ can have a severe impact on the recovery performance. Here we analyze this setting in more detail in the context of the finite-CS system (\ref{w}), in contrast to the analog setting treated in \cite{BME10}, and provide theoretical justification to these previous observations.
In particular, we show that under appropriate conditions on the measurement matrix $\bA$, the effect of pre-measurement noise is to degrade the signal-to-noise ratio (SNR) by a factor of $n/p$. In systems in which $p \ll n$ this degradation may be severe.

\section{Noise Folding}

\subsection{Problem Formulation}

The basic CS model (\ref{w}) is adequate when the noise is introduced at the measurement stage, so that $\bw$ represents the measurement error or noise. However, in many practical scenarios, noise is added to the signal $\bx$ to be measured,  which is not accounted for in (\ref{w}). A more appropriate description in this case is
\begin{equation}
\label{pren}
\by = \bA (\bx + \bz) + \bw,
\end{equation}
where $\bz$ represents the signal noise, i.e., additive noise that is part of the signal being measured.  Our goal in this letter is to analyze the effect of pre-measurement noise $\bz$ on the behavior of CS recovery methods. We assume throughout that $\bw$ is a random noise vector with covariance $\sigma^2 \bI$, and that similarly $\bz$ is a random noise vector with covariance $\sigma_0^2 \bI$, independent of $\bw$.
Under these assumptions, we show that
(\ref{pren}) is equivalent to 
\begin{equation}
\label{prenm}
\by = \bB \bx +  \bu,
\end{equation}
where $\bB$ is a matrix whose coherence and RIP constants are very close to those of $\bA$, and $\bu$ is white noise with variance $\sigma^2 + (p/n) \sigma_0^2$.

It follows that in order to study (\ref{pren}) we can apply the results developed for (\ref{w}), with the important difference that the noise associated with (\ref{pren}) is larger by a factor proportional to $p/n$. When $n \ll p$ this leads to a large noise increase, or {\em noise folding}. This effect is a result of the fact that the measurement matrix $\bA$ aliases, or combines, all the noise elements in $\bz$, even those corresponding to zero elements in $\bx$, thus leading to a large noise increase in the compressed measurements.

\subsection{Equivalent Formulation}

To establish our results, we note that 
\eqref{pren} can be written as
\beq \label{w+z}
\by = \bA \bx + \bv, 
\eeq
with $\bv$ defined by
\beq \label{v}
\bv = \bw + \bA \bz.
\eeq
Under the assumption of white noise, the effective noise vector $\bv$ has covariance $\bQ$, where
\beq \label{cov}
\bQ = \sigma^2 \bI + \sigma_0^2 \bA \bA^T.
\eeq
As can be seen, in general $\bv$ is no longer white, which complicates the recovery analysis.

A simple special case is when $\bA\bA^T$ is proportional to the identity, so that $\bv$ is still white noise. As an example, suppose that $\bA$ is the concatenation of $r = p/n$ orthonormal bases, i.e., $\bA = [\bA_1 \cdots \bA_r]$, where each $\bA_k$ is an $n \times n$ orthogonal matrix---for example, we may want to analyze a signal with a few bases (e.g, wavelets and sinusoids) as in~\cite{donoho-BP}.  In this case,
\[
\bA \bA^T = \bA_1 \bA_1^T + \cdots + \bA_r \bA_r^T = r \bI=\frac{p}{n} \bI
\]
so that the noise covariance \eqref{cov} becomes $\bQ = \gamma \bI$ with
\beq \label{cov-ortho}
\gamma = \sigma^2 + \frac{p}{n} \sigma_0^2.
\eeq
In this special case the models of (\ref{w+z}) (or (\ref{pren})) and (\ref{w}) are identical, with the only difference being that the noise variance of $\bv$ has increased by $\gamma/\sigma^2$ with respect to that of $\bw$. Assuming that $\sigma_0 \approx \sigma^2$ the increase in noise is proportional to $p/n$, a simple case of {\em noise folding}.

In the next section we show that this result holds more generally. Namely, the models of (\ref{w+z}) and (\ref{w}) are roughly equivalent with a noise increase of $\gamma/\sigma^2$ even when $\bA\bA^T$ is not proportional to the identity.

\section{RIP and Coherence with Whitening}

Consider now a more general CS setting, where $\bA$ is an arbitrary matrix with low coherence or low RIP. To proceed, we first whiten the noise $\bv$ in (\ref{w+z})  by multiplying the linear system by $\bQ_1^{-1/2}$, where $\bQ_1 := \bQ/\gamma$, obtaining the equivalent system
\begin{equation}
\label{prenm2}
\by = \bB \bx +  \bu, \quad \bB := \bQ_1^{-1/2} \bA,\bu := \bQ_1^{-1/2} \bv.
\end{equation}
Now, the noise vector $\bu$ is white with covariance matrix $\gamma \bI$, just as in the case of $\bA\bA^T$ proportional to the identity. The main difference, however, is that the whitening changed the measurement matrix from $\bA$ to $\bB$.  We quantify the magnitude of these changes below via the RIP constants and the coherence.
As we show, for standard matrices used in CS, the change is generally not very significant.

Our results hinge on approximating $\bA \bA^T$ by $(p/n) \bI$ even when $\bA$ is arbitrary.  Let
\beq \label{eta}
\eta := \|\bI - (n/p) \bA\bA^T\|_2,
\eeq
measure the quality of this approximation, where $\|\cdot\|_2$ denotes the standard operator norm in $\bbR^n$. In our derivations below we will assume that $\eta$ is small. Under this assumption we will show that the coherence and RIP constants of $\bB$ and $\bA$ are very similar.
 
To justify the assumption that $\eta$ is often small, note that
when the entries of $\bA$ are i.i.d.~zero-mean, variance $1/n$ random variables with a sub-gaussian distribution (including the normal, uniform, Bernoulli distributions), or when the column vectors are i.i.d.~uniform on the sphere, then $\eta \leq C \sqrt{n/p}$ with probability at least $1 -\exp(-c n)$, for  constants $C, c > 0$ depending only on the distribution of $\bA$~\cite[Th.~39]{vershynin}.  
For example, when the entries of $\bA$ are i.i.d.~$N(0,1/n)$ and $n$ is large enough, 
\beq \label{eta-ub}
\eta \leq 2 \sqrt{n/p} + n/p + 4 t/\sqrt{p},
\eeq
with probability at least $1 -2\exp(-t^2/2)$ if $0 < t \leq \sqrt{n}$ and $p \geq n$~\cite[Cor.~35]{vershynin}.
A similar result holds for other distributions, including heavy-tailed distributions, the basic requirement being that the column vectors of $\bA$ are independent with covariance $\bI/n$~\cite{vershynin}. These assumptions are standard in the CS literature. Thus, in these prevalent settings, $\eta$ will be small with high probability.
%


\subsection{Restricted Isometry Analysis}

We begin by showing that the restricted isometry constants of $\bB$ and $\bA$ are similar assuming a small value of $\eta$.

For an index set $\Lambda \subset \{1, \dots, p\}$ of size $s$, let $\bA_\Lambda$ denote the submatrix of $\bA$ made of the column vectors indexed by $\Lambda$. We say that $\bA$ has the RIP with constants $0 <\alpha_s \leq \beta_s$ if
\beq \label{RIP}
\alpha_s \|\bh \|^2 \leq \|\bA_\Lambda \bh \|^2 \leq \beta_s \|\bh \|^2, \ \forall \bh \in \bbR^s,
\eeq
for any index set $\Lambda \subset \{1, \dots, p\}$ of size~$s$.
The following proposition relates the RIP constants of $\bB$ and $\bA$: 
\begin{prp} \label{prp:RIP}
Assume that $\eta < 1/2$ in \eqref{eta} and that $\bA$ satisfies the RIP of order $s$ with constants $0 <\alpha_s \leq \beta_s$.
Then $\bB$ satisfies the RIP of order $s$ with constants $\alpha_s (1-\eta_1)$ and $\beta_s (1+\eta_1)$, where $\eta_1 := \eta/(1-\eta)$.
\end{prp}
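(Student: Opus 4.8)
The plan is to show that the whitening matrix $\bQ_1^{-1/2}$ acts as a near-isometry on $\bbR^n$, and then to transport the RIP of $\bA$ through it. The whole argument is essentially spectral, so the key is to control the spectrum of $\bQ_1$.

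First I would quantify how close $\bQ_1$ is to the identity. Since $\bQ_1 = \bQ/\gamma$ with $\bQ = \sigma^2\bI + \sigma_0^2\bA\bA^T$ and $\gamma = \sigma^2 + (p/n)\sigma_0^2$, a direct subtraction gives
\beq
\bQ_1 - \bI = \frac{\sigma_0^2}{\gamma}\left(\bA\bA^T - \frac{p}{n}\bI\right).
\eeq
Taking operator norms and noting from \eqref{eta} that $\|\bA\bA^T - (p/n)\bI\|_2 = (p/n)\eta$, this yields
\[
\|\bQ_1 - \bI\|_2 = \frac{(p/n)\sigma_0^2}{\gamma}\,\eta \leq \eta,
\]
the final inequality holding because $(p/n)\sigma_0^2 \leq \gamma$. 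Consequently the eigenvalues of the symmetric matrix $\bQ_1$ all lie in $[1-\eta, 1+\eta]$, an interval of strictly positive numbers since $\eta < 1/2$. In particular $\bQ_1^{-1/2}$ is well defined, and the eigenvalues of $\bQ_1^{-1}$ lie in $[1/(1+\eta),\, 1/(1-\eta)]$.

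Next I would translate this into bounds for $\bB_\Lambda = \bQ_1^{-1/2}\bA_\Lambda$. For any index set $\Lambda$ of size $s$ and any $\bh \in \bbR^s$, writing $\bg := \bA_\Lambda\bh$ gives $\|\bB_\Lambda\bh\|^2 = \bg^T\bQ_1^{-1}\bg$, and the eigenvalue bounds above sandwich this Rayleigh quotient:
\[
\frac{1}{1+\eta}\,\|\bA_\Lambda\bh\|^2 \leq \|\bB_\Lambda\bh\|^2 \leq \frac{1}{1-\eta}\,\|\bA_\Lambda\bh\|^2.
\]
Combining with the RIP of $\bA$ from \eqref{RIP}, and using the identity $1/(1-\eta) = 1+\eta_1$ together with the inequality $1/(1+\eta) \geq 1-\eta_1$ (both immediate from $\eta_1 = \eta/(1-\eta)$), delivers the claimed constants $\alpha_s(1-\eta_1)$ and $\beta_s(1+\eta_1)$.

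I expect no serious obstacle: the proof reduces to a spectral sandwich. The only points needing care are confirming the contraction factor $(p/n)\sigma_0^2/\gamma \leq 1$, so that $\eta$ itself (rather than some larger multiple of it) controls $\|\bQ_1 - \bI\|_2$, and verifying that $\eta < 1/2$ keeps $\bQ_1$ strictly positive definite so that $\bQ_1^{-1/2}$ exists and the Rayleigh-quotient bounds are valid. One minor observation is the slight asymmetry of the result: the true lower factor $1/(1+\eta)$ is in fact strictly larger than the stated $1-\eta_1$, so the proposition is phrased in a convenient symmetric form rather than the sharpest one available.
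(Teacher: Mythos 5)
Your proof is correct, and it reaches the stated constants by a genuinely different middle step than the paper. Where you argue spectrally---noting that $\bQ_1$ is symmetric, locating its eigenvalues in $[1-\eta,1+\eta]$, and sandwiching the Rayleigh quotient $\bg^T\bQ_1^{-1}\bg$ with $\bg = \bA_\Lambda\bh$---the paper instead expands $\bQ_1^{-1}-\bI$ as a Neumann series $\sum_{k\geq 1}(\bI-\bQ_1)^k$, bounds its operator norm by the geometric sum $\sum_{k\geq 1}\eta^k = \eta_1$, and then perturbs the quadratic form: $\bigl|\bh^T\bA_\Lambda^T(\bQ_1^{-1}-\bI)\bA_\Lambda\bh\bigr| \leq \eta_1\|\bA_\Lambda\bh\|^2$. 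Both routes start from the same estimate $\|\bQ_1-\bI\|_2 \leq \eta$ and both need $\eta<1$ (respectively $\eta<1/2$ for positivity of the lower constant). Your spectral route is more elementary in that it avoids any series-convergence argument, and, as you observe, it yields the slightly sharper lower factor $1/(1+\eta) \geq 1-\eta_1$; the paper's symmetric form $(1\pm\eta_1)$ discards that small gain. The trade-off is that your argument leans on the symmetry of $\bQ_1$ (guaranteed here, since $\bQ_1$ is a normalized covariance), whereas the paper's series technique makes no such assumption and is reused almost verbatim in the coherence proof of Proposition~\ref{prp:coherence}, where $\bQ_1^{-1/2}-\bI$ is expanded with the Taylor coefficients of $(1-x)^{-1/2}$; your eigenvalue bounds would handle that step just as easily, so nothing essential is lost either way.
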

Though the bound is valid for $\eta < 1$, the smaller RIP constant for $\bB$ is only positive when $\eta < 1/2$, leading to our restriction.  
\begin{proof}
The proof is based on the fact that $\bQ_1$ is close to $\bI$.  Indeed,
by definition of $\eta$ in \eqref{eta},
\beqn
\|\bQ_1 - \bI\|_2 &=& \frac{\sigma_0^2}\gamma \|\bA \bA^T -(p/n)\bI\|_2 \\
&=& \frac{\sigma_0^2 (p/n)}{\sigma^2 + \sigma_0^2 (p/n)} \eta \leq \eta.
\eeqn
Next, we express $\bQ_1^{-1} -\bI$ as a power series
$$
\bQ_1^{-1} -\bI = (\bI - (\bI -\bQ_1))^{-1} -\bI = \sum_{k \geq 1} (\bI -\bQ_1)^k,
$$
which converges since $\|\bQ_1^{-1} - \bI\|_2 \leq \eta < 1$ and $\|\cdot\|_2$ is an operator norm.  Taking norms on both sides of the inequality and using both the triangle inequality and again the fact that $\|\cdot\|_2$ is an operator norm, we obtain
\begin{eqnarray}
\|\bQ_1^{-1} - \bI\|_2
&\leq& \sum_{k \geq 1} \|\bQ_1 -\bI\|_2^k \nonumber \\ 
&\leq& \sum_{k \geq 1} \eta^k = \frac{\eta}{1 - \eta} = \eta_1. \label{eta1}
\end{eqnarray}

Let $\Lambda$ be an index set of size $s$ and take any $\bh \in \bbR^s$.  Then,
$$
\|\bB_\Lambda \bh\|^2 - \|\bA_\Lambda \bh\|^2 = \bh^T \bA_\Lambda^T (\bQ_1^{-1} -\bI) \bA_\Lambda \bh.
$$
Since
\beqn
\left| \bh^T \bA_\Lambda^T (\bQ_1^{-1} -\bI) \bA_\Lambda \bh \right|
&\leq& \|\bQ_1^{-1} - \bI\|_2 \, \|\bA_\Lambda \bh\|^2 \\
&\leq& \eta_1  \, \|\bA_\Lambda \bh\|^2,
\eeqn
we have that
$$
(1 -\eta_1) \|\bA_\Lambda \bh\|^2 \leq \|\bB_\Lambda \bh \|^2 \leq (1 +\eta_1) \|\bA_\Lambda \bh\|^2.
$$
Together with \eqref{RIP}, we obtain
$$
\alpha_s (1 -\eta_1) \|\bh \|^2 \leq \|\bB_\Lambda \bh \|^2 \leq \beta_s (1+\eta_1) \|\bh \|^2,
$$
which concludes the proof.
\end{proof}

\subsection{Coherence Analysis}

We now turn to analyze the coherence. Denoting by $\bA_i$ the $i$th column vector of $\bA$, the coherence of $\bA$ is defined as
$$
\mu(\bA) = \max_{i \neq j} \frac{| \bA_i^T \bA_j|}{\|\bA_i\| \, \|\bA_j\|}.
$$
\begin{prp} \label{prp:coherence}
Assume that $\eta < 3/4$ in \eqref{eta}.  Then
$$
\mu(\bB) \leq (1 + \eta_2) \, \mu(\bA),
$$
where
$$
\eta_2 := (2 \sqrt{1-\eta} -1)^{-2} -1.
$$
Note that $\eta_2 =  2 \eta + O(\eta^2)$, with $\eta_2 < 5 \eta$ when $\eta < 1/2$.
\end{prp}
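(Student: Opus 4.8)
The plan is to bound, for a fixed pair $i\neq j$, the whitened normalized inner product $\bB_i^T\bB_j/(\|\bB_i\|\,\|\bB_j\|)$ and compare it to the original one $\bA_i^T\bA_j/(\|\bA_i\|\,\|\bA_j\|)$. Since coherence is invariant under column rescaling, I would normalize $\|\bA_i\|=\|\bA_j\|=1$. Everything is driven by the closeness of $\bQ_1$ to $\bI$: the proof of \prpref{RIP} already gives $\|\bQ_1-\bI\|_2\le\eta$, so the eigenvalues of $\bQ_1^{1/2}$ lie in $[\sqrt{1-\eta},\sqrt{1+\eta}]$ and $\|\bQ_1^{1/2}-\bI\|_2\le 1-\sqrt{1-\eta}$. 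The hypothesis $\eta<3/4$ is exactly the requirement that $2\sqrt{1-\eta}-1>0$; this quantity, built from the smallest eigenvalue $\sqrt{1-\eta}$ of $\bQ_1^{1/2}$, is what I expect to appear squared in the denominator of the final estimate (two columns being normalized), and its vanishing at $\eta=3/4$ is what forces the stated restriction.

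For the numerator I would write $\bB_i^T\bB_j=\bA_i^T\bQ_1^{-1}\bA_j$ and split $\bA_j=(\bA_i^T\bA_j)\bA_i+\bA_j^\perp$, where $\bA_j^\perp\perp\bA_i$ and $\|\bA_j^\perp\|=\sqrt{1-(\bA_i^T\bA_j)^2}\le 1$. The aligned piece yields $(\bA_i^T\bA_j)\,\bA_i^T\bQ_1^{-1}\bA_i=(\bA_i^T\bA_j)\,\|\bB_i\|^2$, which is genuinely proportional to $\bA_i^T\bA_j$ and hence at most $\mu(\bA)\,\|\bB_i\|^2$. The transverse piece equals $\bA_i^T(\bQ_1^{-1}-\bI)\bA_j^\perp$ because $\bA_i^T\bA_j^\perp=0$, and is at most $\eta_1$ in magnitude by \eqref{eta1}. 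For the denominator I would lower bound $\|\bB_i\|$ and $\|\bB_j\|$ through the extreme eigenvalues of $\bQ_1^{-1/2}$; propagating the factor $\sqrt{1-\eta}$ through both normalizations is what squares $2\sqrt{1-\eta}-1$ and pins down the constant $\eta_2=(2\sqrt{1-\eta}-1)^{-2}-1$.

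The hard part is unquestionably the transverse term. Divided by $\|\bB_i\|\,\|\bB_j\|$ it leaves a contribution of order $\eta$ that is \emph{additive} rather than a multiple of $\mu(\bA)$; indeed, whitening by $\bQ_1^{-1/2}$ can make two exactly orthogonal columns slightly correlated, so a per-pair multiplicative bound cannot hold in full generality. The crux is therefore to exploit the pair that actually attains $\mu(\bB)$: since $\bA_i=\bQ_1^{1/2}\bB_i$ with $\bQ_1^{1/2}$ near-isometric, the columns attaining the maximal whitened correlation cannot have nearly orthogonal preimages $\bA_i,\bA_j$, so the transverse term is comparable to the aligned one and can be folded into the multiplicative factor. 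Making this quantitative — balancing the spectral gap that controls $\eta_1$ against the shrinkage of $\|\bB_i\|\,\|\bB_j\|$, using the extremal values $\sqrt{1\pm\eta}$ — is the step I expect to be most delicate. As a consistency check, $2\sqrt{1-\eta}-1=1-\eta+O(\eta^2)$ gives $\eta_2=2\eta+O(\eta^2)$, so a near-isometric whitening ($\eta\to 0$) leaves the coherence essentially unchanged, as it must.
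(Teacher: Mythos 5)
Your proof has a genuine gap, and it is exactly at the step you flag yourself. Your overall architecture is the paper's: whiten, bound the numerator $|\bB_i^T\bB_j|$ and the column norms $\|\bB_i\|$ separately via $\|\bQ_1^{-1}-\bI\|_2\le\eta_1$ and $\|\bQ_1^{-1/2}-\bI\|_2\le\eta_3:=(1-\eta)^{-1/2}-1$, and combine so that $(1+\eta_1)/(1-\eta_3)^2=(2\sqrt{1-\eta}-1)^{-2}=1+\eta_2$. Your aligned/transverse split of $\bA_j$ is a more honest bookkeeping of the difficulty, but the transverse term $\bA_i^T(\bQ_1^{-1}-\bI)\bA_j^\perp$ is only bounded by $\eta_1$ — an \emph{additive} contribution — and your proposed repair rests on a false claim. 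The pair attaining $\mu(\bB)$ can perfectly well have nearly orthogonal preimages: when $\mu(\bA)$ is small (the typical CS situation, $\mu(\bA)=O(\sqrt{\log p/n})$), \emph{every} pair of columns of $\bA$ is nearly orthogonal, including the preimages of the maximizing pair of $\bB$; attaining a maximum does not make that maximum large. So the transverse term cannot be folded into a multiplicative factor, and what your decomposition actually proves is only an additive bound of the form $\mu(\bB)\le\mu(\bA)+C\eta$ (valid for, say, $\eta\le 1/2$ with an absolute constant $C$), not the stated proposition.

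You should know, however, that the paper's own proof does not close this gap either — it conceals it. The paper asserts
$$
|\bB_i^T\bB_j| \;\le\; |\bA_i^T\bA_j| + |\bA_i^T(\bQ_1^{-1}-\bI)\bA_j| \;\le\; (1+\eta_1)\,|\bA_i^T\bA_j|,
$$
citing \eqref{eta1}; but \eqref{eta1} only yields $|\bA_i^T(\bQ_1^{-1}-\bI)\bA_j|\le\eta_1\|\bA_i\|\,\|\bA_j\|$, not $\eta_1|\bA_i^T\bA_j|$, and the two differ by an unbounded factor for nearly orthogonal columns (if $\bA_i\perp\bA_j$ the claimed bound is zero while the left-hand side need not be). In fact no proof is possible, because the multiplicative statement is false in general. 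Take $\sigma^2=0$ (or negligible relative to $(p/n)\sigma_0^2$), and let $\bB^0=[\bU\ \bW]$ be the concatenation of two generic $n\times n$ orthogonal matrices, $n$ large, so that $\bB^0(\bB^0)^T=2\bI$, the coherence $\mu=\max_{k,j}|(\bU^T\bW)_{kj}|=O(\sqrt{\log n/n})<1/3$ is almost surely attained at a unique pair $(i,j)$, and say $(\bB^0_i)^T\bB^0_j=\mu>0$. Put $\bS:=-\epsilon\,\bigl(\bB^0_i(\bB^0_j)^T+\bB^0_j(\bB^0_i)^T\bigr)$ and $\bA:=(\bI+\bS)\bB^0$. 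Then $(n/p)\bA\bA^T=(\bI+\bS)^2$, so $\bQ_1^{-1/2}=(\bI+\bS)^{-1}$ and whitening returns $\bB=\bB^0$ exactly, while $\eta=2\epsilon(1+\mu)+O(\epsilon^2)$. A first-order expansion (valid once $\epsilon$ is small relative to the gap between the largest and second-largest correlation of $\bB^0$) gives
$$
\mu(\bA)=\mu-2\epsilon(1-\mu^2)+O(\epsilon^2),
\qquad
(1+\eta_2)\,\mu(\bA)=\mu+2\epsilon(1+\mu)(3\mu-1)+O(\epsilon^2),
$$
and the right-hand side is strictly smaller than $\mu=\mu(\bB)$ precisely because $\mu<1/3$: the proposition fails. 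So your instinct about the transverse term was correct and uncovers an error in the published proof, not merely a deficiency of your own attempt; the conclusion that survives — and that both your argument and a corrected version of the paper's yield — is the additive bound $\mu(\bB)\le\mu(\bA)+C\eta$, which still suffices for the paper's asymptotic message since $\eta\to0$ in the regimes considered there.
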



\begin{proof}
To prove the proposition we develop an upper bound on the numerator $| \bB_i^T \bB_j |$ of $\mu(\bB)$, and a lower bound on the denominator elements $\|\bB_i\|$.
For $i \neq j$, we have
\beqn
| \bB_i^T \bB_j |
&=& | \bA_i^T \bQ_1^{-1} \bA_j| \\
&\leq& | \bA_i^T \bA_j| + | \bA_i^T (\bQ_1^{-1} -\bI) \bA_j| \\
&\leq& (1 + \eta_1) \, | \bA_i^T \bA_j|,
\eeqn
by \eqref{eta1}. 

We now lower bound $\|\bB_i\|$ in terms of $\|\bA_i\|$ and $\eta$.
In parallel with the proof of \prpref{RIP}, we express $\bQ_1^{-1/2} -\bI$ as a power series
$$
\bQ_1^{-1/2} -\bI = \sum_{k \geq 1} c_k (\bI -\bQ_1)^k,
$$
where $c_k$ are the coefficients in the Taylor expansion of $(1 - x)^{-1/2}$.
Taking norms on both sides of the inequality, we obtain
\beqn
\|\bQ_1^{-1/2} - \bI\|_2
&\leq& \sum_{k \geq 1} c_k \|\bQ_1 -\bI\|_2^k \\
&\leq& \sum_{k \geq 1} c_k \eta^k = (1 - \eta)^{-1/2} -1.
\eeqn
Therefore,
\beqn
\|\bB_i\|
&=& \|\bQ_1^{-1/2}\bA_i\| \\
&\geq& \|\bA_i\| - \|(\bQ_1^{-1/2} -\bI) \bA_i\| \\
&\geq& (1 - \eta_3) \, \|\bA_i\| ,
\eeqn
where $\eta_3=(1 - \eta)^{-1/2} -1$.
All together, we have
$$
\frac{| \bB_i^T \bB_j|}{\|\bB_i\| \, \|\bB_j\|} \leq \frac{(1 + \eta_1) \, | \bA_i^T \bA_j|}{(1 - \eta_3)^2 \|\bA_i\| \, \|\bA_j\|},
$$
with $(1 + \eta_1)/(1 - \eta_3)^2  = 1 + \eta_2$ by definition of $\eta_2$.
\end{proof}

\section{Conclusion}
Though the CS literature is almost silent on the effect of pre-measurement noise on recovery performance, in this letter we made the point that it may have a substantial impact on SNR.  Indeed, we showed that, for the most common measuring schemes used in CS, the model with pre-measurement noise is, after whitening, equivalent to a standard model with only measurement noise, modulo a change in measurement matrix and an increase in the noise variance by a factor of $p/n$.  We provided rigorous bounds on the RIP constants and the coherence of the new measurement matrix which show that, as $n,p \to \infty$ with $p/n \to 0$, the constants are essentially unchanged.  As the performance of standard recovery methods are formulated in terms of the RIP constants and the coherence, this shows that, in this regime, these methods operate as usual, except with noise folding lead to a noise variance multiplied by $p/n$.






\bibliographystyle{abbrv}

\begin{thebibliography}{10}

\bibitem{BE10}
Z.~Ben-Haim and Y.~C. Eldar.
\newblock The cram\'er-rao bound for estimating a sparse parameter vector.
\newblock {\em IEEE Trans. Signal Processing}, 58:3384--3389, June 2010.

\bibitem{Ben-Haim:2010:CPG:1878287.1878292}
Z.~Ben-Haim, Y.~C. Eldar, and M.~Elad.
\newblock Coherence-based performance guarantees for estimating a sparse vector
  under random noise.
\newblock {\em Trans. Sig. Proc.}, 58:5030--5043, October 2010.

\bibitem{BME10}
Z.~Ben-Haim, T.~Michaeli, and Y.~C. Eldar.
\newblock Performance bounds and design criteria for estimating finite rate of
  innovation signals.
\newblock 2010.
\newblock submitted to {\em IEEE. Trans. on Info. Theory}.

\bibitem{BRT09}
P.~Bickel, Y.~Ritov, and A.~Tsybakov.
\newblock {Simultaneous analysis of Lasso and Dantzig selector}.
\newblock {\em The Annals of Statistics}, 37(4):1705--1732, 2009.

\bibitem{CandeT_Dantzig}
E.~Cand\`{e}s and T.~Tao.
\newblock The {D}antzig {S}elector: {S}tatistical estimation when $p$ is much
  larger than $n$.
\newblock 35(6):2313--2351, 2007.

\bibitem{CRT2}
E.~J. Cand{\`e}s, J.~Romberg, and T.~Tao.
\newblock Stable signal recovery from incomplete and inaccurate measurements.
\newblock {\em Comm. Pure Appl. Math.}, 59:1207--1223, 2006.

\bibitem{intro-CS}
E.~J. Cand\`es and M.~B. Wakin.
\newblock An introduction to compressive sampling [a sensing/sampling paradigm
  that goes against the common knowledge in data acquisition].
\newblock {\em IEEE Signal Processing Magazine}, 25(2):21--30, March 2008.

\bibitem{donoho-BP}
S.~S. Chen, D.~L. Donoho, and M.~A. Saunders.
\newblock Atomic decomposition by basis pursuit.
\newblock {\em SIAM J. Sci. Comput.}, 20(1):33--61, 1998.

\bibitem{donoho-CS}
D.~L. Donoho.
\newblock Compressed sensing.
\newblock {\em IEEE Trans. Inform. Theory}, 52(4):1289--1306, 2006.

\bibitem{DuartE_Structured}
M.~F. Duarte and Y.~C. Eldar.
\newblock Structured compressed sensing: Theory and applications.
\newblock 2010.
\newblock Submitted.

\bibitem{5470150}
B.~Miller, J.~Goodman, K.~Forsythe, J.~Sun, and V.~Goyal.
\newblock A multi-sensor compressed sensing receiver: Performance bounds and
  simulated results.
\newblock In {\em Signals, Systems and Computers, 2009 Conference Record of the
  Forty-Third Asilomar Conference on}, pages 1571--1575, 2009.

\bibitem{xampling}
M.~Mishali and Y.~C. Eldar.
\newblock Xampling: Compressed sensing for analog signals.
\newblock In Y.~C. Eldar and G.~Kutyniok, editors, {\em Compressed Sensing:
  Theory and Applications}. Cambridge University Press, 2011.

\bibitem{PatiRK_Orthogonal}
Y.~Pati, R.~Rezaifar, and P.~Krishnaprasad.
\newblock {Orthogonal Matching Pursuit}: {R}ecursive function approximation
  with applications to wavelet decomposition.
\newblock Pacific Grove, CA, Nov. 1993.

\bibitem{treichler-davenport}
J.~Treichler, M.~Davenport, and R.~Baraniuk.
\newblock Application of compressive sensing to the design of wideband signal
  acquisition receivers.
\newblock In {\em Proc. 6th U.S. / Australia Joint Workshop on Defense
  Applications of Signal Processing (DASP)}, Lihue, Hawaii, 2009.

\bibitem{vershynin}
R.~Vershynin.
\newblock Introduction to the non-asymptotic analysis of random matrices.
\newblock Available from \url{http://arxiv.org/abs/1011.3027}, 2010.

\end{thebibliography}

\end{document}